\PassOptionsToPackage{hyphens}{url}
\documentclass[a4paper,USenglish,nolineno]{socg-lipics-v2021}

\pdfoutput=1 %
\hideLIPIcs  %

\title{Topologically Stable Hough Transform} %

\author{Stefan Huber}{Josef Ressel Centre for Intelligent and Secure Industrial Automation,\\ Salzburg University of Applied Sciences, Austria \and \url{https://www.sthu.org/}}{stefan.huber@fh-salzburg.ac.at}{https://orcid.org/0000-0002-8871-5814}{}%

\author{Kristóf Huszár}{Institute of Geometry, Graz University of Technology, Austria \and \url{https://kristofhuszar.github.io/}}{kristof.huszar@tugraz.at}{https://orcid.org/0000-0002-5445-5057}{}

\author{Michael Kerber}{Institute of Geometry, Graz University of Technology, Austria \and \url{https://www.geometrie.tugraz.at/kerber/}}{kerber@tugraz.at}{https://orcid.org/0000-0002-8030-9299}{}

\author{Martin Uray}{Josef Ressel Centre for Intelligent and Secure Industrial Automation,\\ Salzburg University of Applied Sciences, Austria \and \url{https://martinuray.github.io/}}{martin.uray@fh-salzburg.ac.at}{https://orcid.org/0000-0002-8916-3847}{}

\authorrunning{S. Huber, K. Husz\'ar, M. Kerber, and M. Uray} %

\Copyright{Stefan Huber, Kristóf Huszár, Michael Kerber, and Marin Uray} %

\ccsdesc[500]{Mathematics of computing~Geometric topology}  %
\ccsdesc[500]{Computing methodologies~Image processing}

\keywords{Computational topology, Hough transform, persistence, quad-tree subdivision, stability} %

\category{} %

\relatedversion{} %

\funding{The financial support by the Austrian Federal Ministry of Economy, Energy and Tourism, the National Foundation for Research, Technology and Development and the Christian Doppler Research Association is gratefully acknowledged.}%

\nolinenumbers %

\EventEditors{John Q. Open and Joan R. Access}
\EventNoEds{2}
\EventLongTitle{42nd Conference on Very Important Topics (CVIT 2016)}
\EventShortTitle{CVIT 2016}
\EventAcronym{CVIT}
\EventYear{2016}
\EventDate{December 24--27, 2016}
\EventLocation{Little Whinging, United Kingdom}
\EventLogo{}
\SeriesVolume{42}
\ArticleNo{23}
\usepackage{microtype} %
\usepackage{xcolor}
\usepackage{mathtools}
\usepackage{csquotes}
\usepackage{overpic}
\usepackage{cleveref}

\usepackage{physics}

\newcommand\nR{{\mathbb{R}}}
\DeclareMathOperator{\sgn}{sgn}

\newcommand{\score}{\mathrm{S}}
\newcommand{\eps}{\epsilon}
\renewcommand{\line}{\ell}
\newcommand{\kernel}{\kappa}

\newcommand{\defeq }{\vcentcolon=} %

\begin{document}

\maketitle

\begin{abstract}
We propose an alternative formulation of the well-known Hough transform to detect lines in point clouds. Replacing the discretized voting scheme of the classical Hough transform by a continuous score function, its persistent features in the sense of persistent homology give a set of candidate lines. We also devise and implement an algorithm to efficiently compute these candidate lines.
\end{abstract}

\section{Introduction}
\label{sec:introduction}

The \emph{Hough transform} is a classical method in Computer Vision to detect
lines (or more complicated geometric shapes) in partially-sampled and noisy
sceneries, dating back to the 1960s~\cite{mukhopadhyay2015}. Informally, lines are detected by a voting scheme where every
sample point $p$ votes for lines that contain $p$ or are close-by,
and those lines that accumulate many votes are considered to be part of
underlying scenery. To specify the meaning of ``close-by,''
the space of lines is parameterized by $\nR^2$ and discretized into pixels.
A sample point transforms into a planar curve in that line space and votes
for those lines whose pixel it traverses, see Figure~\ref{fig:hough_intro}.

The above voting scheme suffers from two major problems: first,
noise in the sample can lead to a collection of neighboring pixels that
all get large vote counts. If the selection rule is to chose the $k$ lines
with largest votes, this may result in a collection of lines that
are very close to each other, which might be not the desired outcome.
Second, the voting scheme based on the binary decision whether a pixel is traversed or not is unstable: even translating the grid (i.e., choosing a different
origin) may result in drastically different outcomes.

\subparagraph{Contributions.}
We propose a simple and efficient framework that addresses both of the
shortcomings. First, instead of voting on pixels, the sample points
continuously vote on all lines in line space by scoring each of them.
The score is determined by the Euclidean
distance of the point and the line and some kernel function, e.g., a linear or a Gaussian curve. This way, every candidate line receives a score from the
sample points, yielding a total score for any candidate line. The
resulting \emph{score function} is continuous over the parameter space of lines, and is stable under perturbations of the point sample, meaning that small changes in the sample will lead to controlled point-wise
changes in the score function.
The authors of~\cite{Princen1994} also follow a continuous treatment, but for a hypothesis testing framework.

The second idea concerns the selection of the most important lines based on the
score function. We suggest a natural approach to select the lines corresponding to the
local maxima of the score function as they received the highest score
in a neighborhood.
To sort these (potentially many) local maxima by importance,
we pick those maxima with the largest \emph{persistence} in the sense of
$0$-homology for the super-levelset filtration of the score
function~\cite{eh-computational}.\footnote{This idea of using persistence has
  already been proposed in a preliminary work~\cite{ferner2025}. However, it
  lacked the crucial idea of a continuous treatment via smooth voting
  functions, but relied on the original discretized setting, preventing
  a stability result as we now obtained in this work.}
Using persistence prevents that two close-by lines are both selected,
except when they are separated by a significant valley in the score function.
This addresses the problem of selecting two many similar lines.
A similar approach has been already applied to clustering~\cite{chazal2013}.
While the number of local maxima is not stable under perturbations,
the number of \emph{persistent} (or prominent) local maxima turns out to be stable, and their locations
also remain stable under certain conditions, see Theorem~\ref{thm:maxima-thm}.

We also describe an efficient algorithm that approximates the score function via a quad-tree subdivision. It guarantees that each local maximum with significant persistence has a representative line in the output, and that no line with insignificant persistence appears.

\begin{figure}[htb]
  \centering
  \begin{overpic}[width=.975\textwidth]{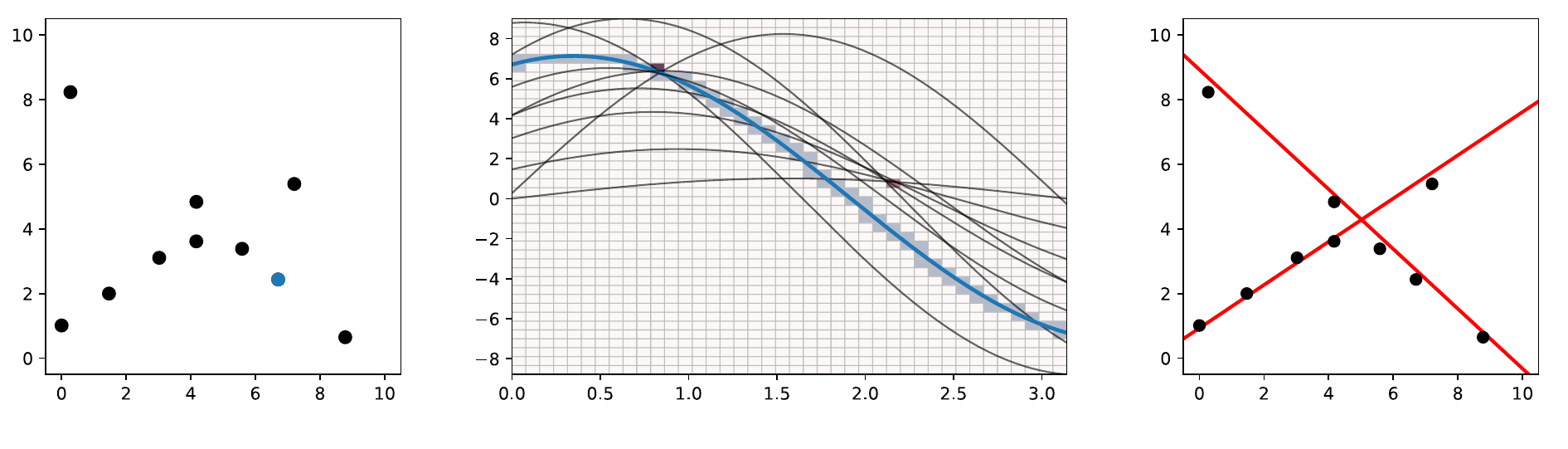}
  	\put (50,.5) {\footnotesize{$\Theta$}}
  	\put (13.5,.5) {\footnotesize{$x$}}
  	\put (86.25,.5) {\footnotesize{$x$}}
  	\put (-1,15.5) {\footnotesize{$y$}}
  	\put (28.75,15.5) {\footnotesize{$r$}}
  	\put (71.5,15.5) {\footnotesize{$y$}}
  \end{overpic}
  \caption{Left: Ten points sampled from two lines. Middle: The classical Hough transform passes to the space of lines in which every input point dualizes into
    a sinusoidal curve. The blue point on the left, for instance, is transformed
    to the blue curve in the middle. The pixels that the blue curve votes for
    are marked. In that example, there are two pixels (red) that receive five votes.
    Right: The centers of the high-count pixels are two
    lines in primal space that correspond to the two drawn lines.}
  \label{fig:hough_intro}
\end{figure}

\section{The score function}
\label{sec:score}

\subparagraph{Line parameterization.}
Set $M:=\nR\times [0,\pi]$.
For $(r,\Theta)\in M$, we define
\[
\line(r,\Theta):=\{(x,y)\in\nR^2\mid r=x\cos\Theta+y\sin\Theta\}.
\]
For $p=(r\cos\Theta,r\sin\Theta)$, $\line(r,\Theta)$ is the line through
$p$ that is perpendicular to the vector $\overrightarrow{p}$. Note that the set of all
lines in $\nR^2$ is in bijection with $M$, except that
$\line(r,0)=\line(-r,\pi)$, as the space of lines is homeomorphic to the open
M\"obius strip. For simplicity, we will first work with $M$ as a planar domain
and discuss the twisted glueing to a Möbius strip later.

\subparagraph{Score of a line.}
For a point $p$ and a line $\line$ in the Euclidean plane, write $\Delta(p,\line)$ for the
orthogonal distance of $p$ from $\line$.
We want that the score of $\line$ by $p$ equals $1$, if $\Delta(p,\line)=0$, and
otherwise it decreases as $\Delta(p,\line)$ grows. For that, we choose
a monotonously decreasing kernel function $\kernel:[0,\infty) \to [0,1]$
with $\kernel(0)=1$ and $\kernel(x)\to 0$
for $x\to\infty$. Two examples are
\[
  \kernel_\text{hat}(x) \defeq \max\left\{0,1-\frac{x}{\sigma} \right\}
  \qquad\text{and}\qquad
  \kernel_\text{RBF}(x) \defeq \exp\left(-\frac{x^2}{2\sigma^2}\right),\]
referred to as \emph{hat} and \emph{Gauss} kernel, respectively.
Now, for a finite set $P \subset \nR^2$ we define
\[
\score(p,\line)\defeq \kernel(\Delta(p,\line))
\qquad
\text{and}
\qquad
\score(\line)\defeq \score_P(\line)\defeq \frac{1}{|P|}\sum_{p\in P}\score(p,\line).
\]
We refer to $\score \colon M \to \nR, (r, \Theta)
\mapsto \score(\ell(r, \Theta))$ as the \emph{score function}.
Clearly, $0\leq\score(\line)\leq 1$. If $\kappa(x)=1$ only for $x=0$, then
$\score(\line) = 1$ implies that $\line$ contains all points of $P$.

The lines $\line(r,\Theta)$ containing a fixed $p=(x_0,y_0)$
are given by $r=x_0\cos\Theta + y_0\sin\Theta$,
whose solutions $(r,\Theta)$ form a sinusoidal curve $\gamma$, along which $\score(p,\ell(r, \Theta)) = 1$. Moreover, fixing $\Theta$ and varying $r$,
the distance of $p$ to the line $\line(r,\Theta)$ changes linearly with $r$.
Hence, the graph of $\score(p,\cdot)$ is obtained by sliding the function
graph of the kernel function $\kernel$ along $\gamma$. 
The graph of the score function $\score$ is obtained by
a normalized sum of $|P|$ such graphs.

\subparagraph{Stability.}
It is evident by construction that, if the kernel $\kernel$ is continuous/smooth,
then so is the score function $\score$. Moreover, the fact
that the Euclidean distance function to a fixed line is $1$-Lipschitz
immediately yields the stability of the score function.

\begin{lemma}
  Let $\kernel$ be Lipschitz-continuous with constant $\lambda$. Let $P'$
  denote an $\eps$-perturbation of $P$ (i.e., every point is replaced
  by a point in its $\eps$-ball). Then
  $
  \|\score_{P}-\score_{P'}\|_\infty\leq \lambda\eps.
  $
\end{lemma}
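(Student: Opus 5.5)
The plan is to bound $|\score_P(\line)-\score_{P'}(\line)|$ pointwise for an arbitrary fixed line $\line=\line(r,\Theta)$, with a bound independent of $\line$, and then take the supremum over $M$. Write $P=\{p_1,\dots,p_n\}$ and $P'=\{p_1',\dots,p_n'\}$, where $p_i'$ is the perturbed copy of $p_i$, so $\|p_i-p_i'\|\le\eps$. Pairing the points this way, the difference of scores becomes a single averaged sum,
\[
\score_P(\line)-\score_{P'}(\line)=\frac{1}{n}\sum_{i=1}^n\bigl(\kernel(\Delta(p_i,\line))-\kernel(\Delta(p_i',\line))\bigr),
\]
and by the triangle inequality it suffices to bound each summand by $\lambda\eps$.

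For a single summand I would chain two Lipschitz estimates. First, $|\kernel(a)-\kernel(b)|\le\lambda|a-b|$ by hypothesis. Second, $\Delta(\cdot,\line)$ is $1$-Lipschitz on $\nR^2$. For $\line=\line(r,\Theta)$ this is explicit: $\Delta((x,y),\line)=|x\cos\Theta+y\sin\Theta-r|$ is the absolute value of an affine map whose linear part $(\cos\Theta,\sin\Theta)$ has norm $1$. Alternatively, $\Delta(q,\line)\le\|q-q'\|+\Delta(q',\line)$ by considering the foot point of $q'$. Together these give $|\kernel(\Delta(p_i,\line))-\kernel(\Delta(p_i',\line))|\le\lambda\,\|p_i-p_i'\|\le\lambda\eps$.

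Averaging over the $n$ terms yields $|\score_P(\line)-\score_{P'}(\line)|\le\lambda\eps$ for every $\line$, hence $\|\score_P-\score_{P'}\|_\infty\le\lambda\eps$. The $1/|P|$ normalization is what makes the bound independent of $n$.

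No step is a real obstacle. The only point needing care is the implicit assumption that the perturbation is a bijection $P\to P'$ with $|P'|=|P|$. If two points merge under the perturbation, I would treat $P'$ as a multiset so that the pairing and the normalization still match. The argument also works for non-monotone kernels, since only the Lipschitz property is used.
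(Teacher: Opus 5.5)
Your proof is correct and is the argument the paper has in mind. The paper justifies the lemma only by remarking that the distance to a fixed line is $1$-Lipschitz; you spell this out by pairing $p_i$ with $p_i'$, composing with the $\lambda$-Lipschitz kernel, and averaging, and your reading of $P'$ as a multiset (so the normalization $1/|P|$ survives when points merge) is a worthwhile precision the paper leaves implicit.
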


\section{Persistence-based selection}
For fixed $h_0\geq 0$, let
$ \score^{\geq h_0}:=\{\line\in M\mid \score(\line)\geq h_0\} $
denote the \emph{super-levelset} of $\score$.
By varying $h_0$ from $+\infty$ to $0$, we ``sweep  out'' the graph of $\score$
and obtain the super-levelset filtration. We consider the \emph{$0$-dimensional
persistent homology} of that filtration and sort the local maxima of $\score$
according to the persistence of their homology class.
For readers not familiar with persistent homology,
we offer an elementary explanation.
Assume for simplicity that all local maxima of $\score$ are isolated
and no two maxima have the same height.
Fix such a local maximum $x$ with height $h$. Call $x$ \emph{dominant} in
$\score^{\geq h'}$ if the connected component of $\score^{\geq h'}$ that
contains $x$ has $x$ as its highest point. Now, the persistence of $x$ is the
supremum over all $t\geq 0$ such that $x$ is dominant in $\score^{\geq
h-t}$.
We call $h$ the \emph{birth} level and $h-t$ the \emph{death} level of this component.
Lemma~\ref{lem:our-stability} is a reformulation of the stability result of persistence~\cite{ceh-stability}.

\begin{lemma}
  Let $\score_P, \score_{P'}$ be two score functions over $M$ with
  $\|\score_P-\score_{P'}\|_\infty=\eps$. Then, there exists a partial matching
  of local maxima of $\score_P$ to local maxima of $\score_{P'}$ such that all
  maxima of either score function with persistence greater than $\eps$ are
  matched, and the persistence of two matched local maxima differs by at most
  $2\eps$.
  \label{lem:our-stability}
\end{lemma}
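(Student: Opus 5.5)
The plan is to derive the statement directly from the bottleneck stability theorem for persistence diagrams~\cite{ceh-stability}, applied to the super-levelset filtrations of $\score_P$ and $\score_{P'}$, and then to translate the resulting diagram matching into a matching of local maxima.

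\textbf{Step 1: the diagrams are well defined.} I would first check the hypotheses of~\cite{ceh-stability}. If $\kernel$ is continuous, then $\score_P$ and $\score_{P'}$ are continuous. Under the standing assumption that local maxima are isolated, they are tame: their $0$-dimensional persistence diagrams for the super-levelset filtration have finitely many points above any positive persistence level. The filtrations of $-\score_P$ and $-\score_{P'}$ are sublevelset filtrations on the same space $M$, and $\|(-\score_P)-(-\score_{P'})\|_\infty=\eps$. Bottleneck stability then gives $d_B(D_0(\score_P),D_0(\score_{P'}))\le\eps$. Concretely, there is a bijection $\phi$ between the multisets $D_0(\score_P)\cup\Delta$ and $D_0(\score_{P'})\cup\Delta$, where $\Delta$ is the diagonal, such that every off-diagonal point $(b,d)$ is sent within $L_\infty$-distance $\eps$ of its image.

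\textbf{Step 2: from diagram points to maxima.} Using the elementary description given above, I would identify the off-diagonal points of $D_0(\score_P)$ with the local maxima $x$ of $\score_P$. The maximum $x$ corresponds to the point (birth $h$, death $h-t$), whose persistence is $t=b-d$. The essential class, belonging to the global maximum, dies at $\min\score$; this is finite because $\score\ge0$ and the death can be taken to be $0$. It is therefore treated like any other point, and the essential classes of the two functions are matched to each other.

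Now suppose $x$ and $x'$ are matched by $\phi$, with $|b-b'|\le\eps$ and $|d-d'|\le\eps$. Then
\[
|(b-d)-(b'-d')|\le|b-b'|+|d-d'|\le2\eps .
\]
This gives the claimed $2\eps$ bound on the difference in persistence. I define the partial matching of local maxima as the restriction of $\phi$ to pairs of off-diagonal points.

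\textbf{Step 3: which maxima must be matched.} This is the step I expect to be the main obstacle. A maximum $x$ is left unmatched exactly when $\phi$ sends its point to the diagonal. That forces $(b,d)$ to lie within $L_\infty$-distance $\eps$ of some point $(s,s)$, which yields only $b-d\le2\eps$, not $b-d\le\eps$.

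To reach the threshold $\eps$ as worded, I would first try to exploit the extra structure present here: both functions live on the same space $M$, and the persistence pairing is by elder rule. The idea is to show directly that a maximum $x$ of $\score_P$ with persistence $t>\eps$ has a maximum of $\score_{P'}$ in its own basin, meaning the component of $\score_P^{\ge h-t}$ containing $x$. One checks that $\score_{P'}(x)\ge h-\eps$, while on the boundary of that basin $\score_{P'}\le h-t+\eps$. The maximum of $\score_{P'}$ over the basin, which exceeds the boundary value whenever $t>2\eps$, gives the candidate partner.

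Carrying this through for $t\in(\eps,2\eps]$ is precisely where the argument is delicate. A bump of height $1.5\eps$ can be flattened by a perturbation of size $0.75\eps$. If that tension cannot be resolved, the step would show that the threshold in the statement has to be read as $2\eps$, which is the threshold Step 1 provides. In that case, the conclusion actually proved is that every maximum with persistence greater than $2\eps$ is matched, and that matched maxima differ in persistence by at most $2\eps$.
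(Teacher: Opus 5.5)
Your Steps 1 and 2 follow the paper's route exactly. The paper gives no argument beyond calling the lemma a reformulation of the stability theorem of~\cite{ceh-stability}. Your translation of an $\eps$-matching of the $0$-dimensional super-levelset diagrams into a matching of maxima, with $|(b-d)-(b'-d')|\le 2\eps$, is how that reformulation is made precise. Two small remarks on that part. Closing off the essential class at level $0$ is a harmless convention, since $\score\ge 0$ and $\score^{\geq 0}=M$. Local finiteness of the diagrams comes from compactness of the super-levelsets at positive levels, not from isolatedness of the maxima.

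The obstacle you isolate in Step 3 is not a gap in your proof but an error in the statement, and you should say so outright rather than hedge. With persistence defined as in the paper, i.e.\ the full bar length $b-d$, a diagram point may be matched to the diagonal whenever $(b-d)/2\le\eps$. Bottleneck stability therefore forces only maxima of persistence greater than $2\eps$ to be matched. No other argument can do better either, because your flattening example is a genuine counterexample:
\begin{itemize}
\item Let $f(r,\Theta)=\phi(r)-\eta(\Theta-\pi/2)^2$, where $\phi$ has exactly two local maxima: the global one at $r_1$ and, to its right, one of persistence $t\in(\eps,2\eps)$. The $0$-dimensional persistence of $f$ is that of $\phi$.
\item On $[r_1,\infty)$ the function $\phi$ never increases by $2\eps$ or more. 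So it can be replaced there by a strictly decreasing $\tilde\phi$ with $\|\phi-\tilde\phi\|_\infty=\eps$.
\item Then $g(r,\Theta)=\tilde\phi(r)-\eta(\Theta-\pi/2)^2$ has a single local maximum. That maximum must be paired with the global maximum of $f$, which leaves a maximum of persistence $t>\eps$ unmatched.
\end{itemize}
This example uses general functions on $M$ rather than actual score functions. That is the generality of the cited theorem, and nothing specific to score functions enters the paper's justification.

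The lemma as written is correct only if persistence is read as the $L_\infty$-distance $(b-d)/2$ to the diagonal; in that reading the difference bound could even be sharpened to $\eps$. The paper's own Theorem~\ref{thm:maxima-thm} assumes $\eps<\alpha/2$, i.e.\ it uses exactly the $2\eps$ threshold you prove. So your final version is the correct lemma.

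The basin argument should be dropped. It is essentially the proof of Theorem~\ref{thm:maxima-thm}, and it needs $t>2\eps$ for the same reason. It also only yields the existence of a nearby maximum, not an injective matching.
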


\begin{theorem}
  \label{thm:maxima-thm}
  Let $\score_P$ be a score function with local maximum $x=(r,\Theta)$ and $h = \score_P(x)$
  of persistence $\alpha$.
  Let $\score_{P'}$ be a score function with $\|\score_P-\score_{P'}\|_\infty=\eps<\alpha/2$.
  Then, $\score_{P'}$ contains a local maximum $x'$ of persistence
  at least $\alpha-2\eps$
  such that $|\score_P(x')-h|\leq 2\eps$ and $x'$
  lies in the connected component of $x$ of the set $\score_P^{\geq h-\alpha}$.
\end{theorem}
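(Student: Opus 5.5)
The plan is to argue directly with super-level sets rather than through the abstract matching of Lemma~\ref{lem:our-stability}, since that lemma does not locate the matched maximum. The basic tool is the pointwise sandwich
\[
\score_{P'}-\eps\leq\score_P\leq\score_{P'}+\eps ,
\]
which gives two inclusions: every connected subset of $\score_{P'}^{\geq c}$ lies in $\score_P^{\geq c-\eps}$, and every connected subset of $\score_{P'}^{> c}$ lies in $\score_P^{> c-\eps}$.

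\emph{Step 1: the region around $x$.} Let $C$ be the component of $\score_P^{\geq h-\alpha}$ containing $x$. Let $U$ be the component of $\score_P^{>h-\alpha}$ containing $x$; it is a subset of $C$. By the definition of persistence, $x$ is dominant in $\score_P^{\geq h-t}$ for all $t<\alpha$. Since $U$ is the increasing union of the components of $\score_P^{\geq h-t}$ containing $x$, $t\uparrow\alpha$, we get $\score_P\leq h$ on $U$.

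\emph{Step 2: choosing $x'$.} Let $K$ be the component of $\score_{P'}^{>h-\alpha+\eps}$ containing $x$. This component exists because $\score_{P'}(x)\geq h-\eps>h-\alpha+\eps$, using $\eps<\alpha/2$. By the sandwich, $K$ is connected, contains $x$, and lies in $\score_P^{>h-\alpha}$, so $K\subseteq U\subseteq C$. Therefore $\score_{P'}\leq\score_P+\eps\leq h+\eps$ on $K$. Set $x'$ to be a point of $K$ where $\score_{P'}$ attains its supremum over $K$.

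\emph{Step 3: attainment of the supremum (the main technical obstacle).} Two facts are needed. First, $\score_{P'}$ tends to $0$ as $|r|\to\infty$, because $P'$ is finite and $\kernel\to0$. Hence for a positive level $c$ the set $\score_{P'}^{\geq c}$ is compact, and we may take $c=h-\eps>0$ (if $h-\eps\leq0$ then $h\leq\eps<\alpha/2$, while $\alpha\leq h$, which is impossible). Second, $K\cap\score_{P'}^{\geq h-\eps}$ must be closed. It is, because $K$ is a component of a set that is open in $M$, and the closure of $K$ only adds points of level $h-\alpha+\eps<h-\eps$. Compactness then gives attainment. The Möbius identification of the boundary of $M$ is harmless here, and I would handle it by a remark.

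\emph{Step 4: verifying the claimed properties.} Since $K$ is open and $x'$ maximizes $\score_{P'}$ on $K$, the point $x'$ is a local maximum. Its value satisfies $h-\eps\leq\score_{P'}(x')\leq h+\eps$. Combining this with $x'\in U$ gives $h-2\eps\leq\score_{P'}(x')-\eps\leq\score_P(x')\leq h$, so $|\score_P(x')-h|\leq2\eps$, and $x'\in C$.

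For persistence, take any level $c$ with $h-\alpha+\eps<c\leq\score_{P'}(x')$. The component of $\score_{P'}^{\geq c}$ containing $x'$ is connected, lies in $\score_{P'}^{>h-\alpha+\eps}$, and meets $K$, so it is contained in $K$. On it $x'$ is the highest point, so $x'$ is dominant at every such level. Hence its persistence is at least $\score_{P'}(x')-(h-\alpha+\eps)\geq\alpha-2\eps$.

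If ties among maxima occur in $K$, pick $x'$ consistently with the tie-breaking convention; I expect only this bookkeeping, not a real difficulty.
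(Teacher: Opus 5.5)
Your proof is correct and follows the same idea as the paper's proof: take the component through $x$ of the superlevel set of $\score_{P'}$ at level $h-\alpha+\eps$, which lies inside the component $C$ of $\score_P^{\geq h-\alpha}$, and let $x'$ be the maximum of $\score_{P'}$ on it. Your use of the strict superlevel sets $\score_P^{>h-\alpha}$ and $\score_{P'}^{>h-\alpha+\eps}$, together with the attainment argument, is a needed refinement and not pedantry: the paper asserts that $x$ is dominant in the closed component $C$, but this fails in the typical case where the supremum defining $\alpha$ is not attained, since $C$ then already contains the higher maximum that absorbs $x$ at a merge point of height $h-\alpha$; your open components $U$ and $K$ avoid exactly this.
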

\begin{proof}
By definition, $x$ is dominant in a connected component $C$ of $\score_P^{\geq h-\alpha}$.
Note that $\score_{P'}^{\geq h-\alpha+\eps}\subseteq \score_P^{\geq h-\alpha}$,
so $C\cap \score_{P'}^{\geq h-\alpha+\eps}$ is a connected component of $\score_{P'}^{\geq h-\alpha+\eps}$ (it is non-empty because of $x$). Its maximum $x'$ is easily seen to satisfy the three postulated properties.
\end{proof}

Theorem~\ref{thm:maxima-thm} tells us that for $\alpha\gg\eps$,
the persistent local maxima of $\score_{P'}$ approximate
those of $\score_P$ in the sense that for every persistent local maximum of $\score_P$,
we can find a persistent local maximum of $\score_{P'}$ with almost the same score
and being in the same super-levelset.

\section{Computation}\label{sec:computation}

We devise a simple approximation scheme of $\score_P$ based on a quad-tree
subdivision that yields a $\tilde{\score}_P$ with
$\|\score_P-\tilde{\score}_P\|_\infty\leq\eps$ for a fixed $\eps>0$ and is
constant on each quad-tree cell.\footnote{%
  The non-continuity of $\tilde{\score}_P$ does not cause a problem because
  $\tilde{\score}_P$ can be converted to a continuous score function with the
  same persistent homology. Without going into further detail, one triangulates
  each quad with a central Steiner vertex, considers a barycentric refinement
and then constructs a PL-function. We omit the details of this step and refer
to \cite[sec.~VI.3]{eh-computational}. }
  We first determine some $r_0\in\nR$ such that $\score_P \le \eps$ on
  $M\setminus [-r_0,r_0]\times[0,\pi]$, and set $\tilde{\score}_P$ zero there.

The main predicate for the subdivision part is a Lipschitz predicate that,
given a box $B:=[a,b]\times[c,d]\subset M$, yields a $\lambda \in \nR$ such that
$\score_P$, restricted to $B$, is continuous with Lipschitz constant $\lambda$.
Writing $h$ for the value of $\score_P$ at the midpoint of $B$, it follows that
the $\score_P$ at any point in $B$ differs from $h$ by at most $\lambda \cdot
\operatorname{diam}(B)/2$, where $\operatorname{diam}(B)$ denotes the diameter of $B$.
With that predicate, we simply subdivide the initial box
$[-r_0,r_0]\times[0,\pi]$ of the quad-tree and apply it until a subdivided box
satisfies $\lambda \cdot \mathrm{diam}(B)/2\leq\eps$. In that case, we set
$\tilde{\score}_P$ within that box $B$ to be the value of $\score_P$ at the
midpoint of $B$.

\begin{figure}[htb]
  \centering
  \includegraphics[height=5cm]{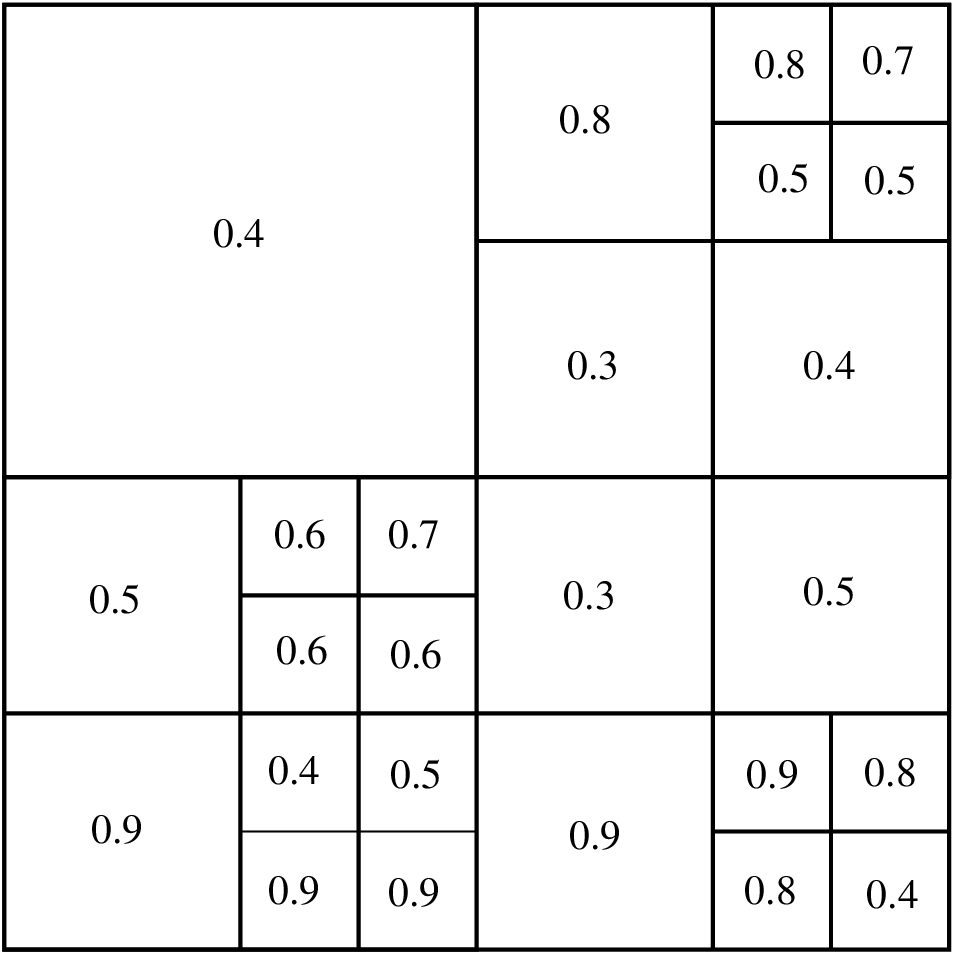}\\
  \vspace{0.6cm}
  
  \hspace{0.4cm}
  \includegraphics[height=3.5cm]{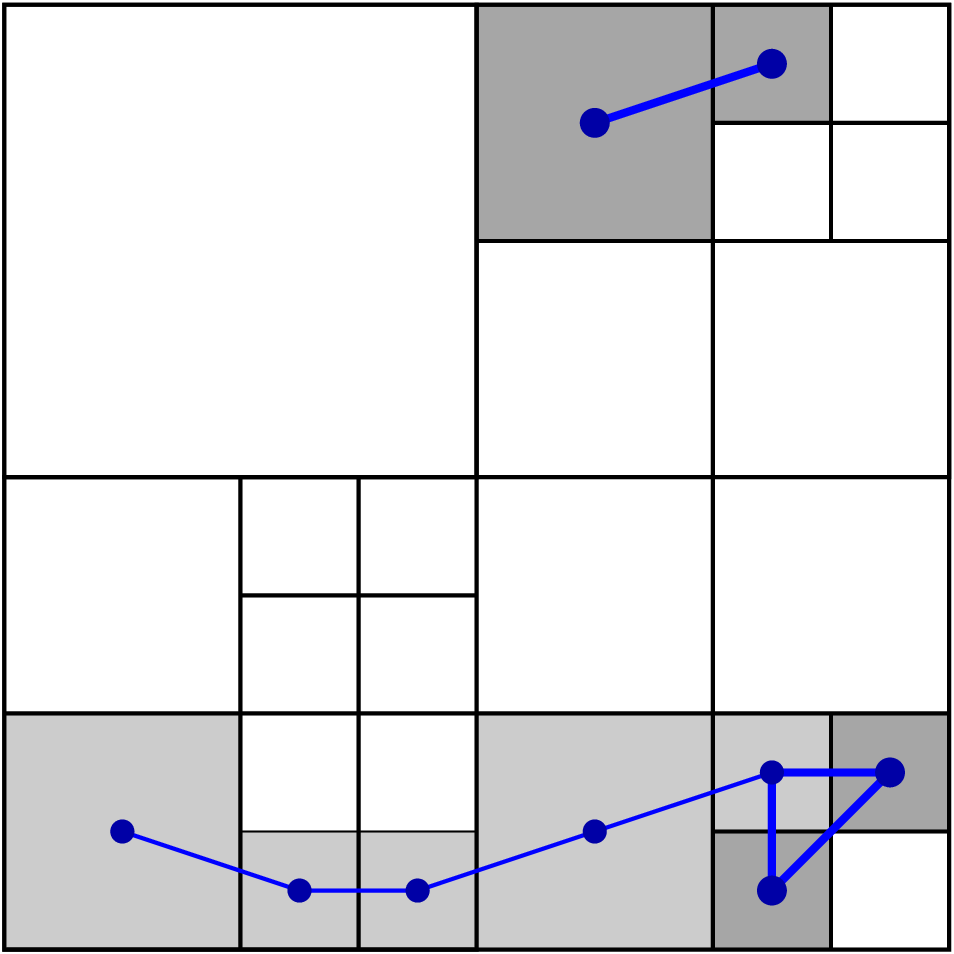}
  \hspace{0.3cm}
  \includegraphics[height=3.5cm]{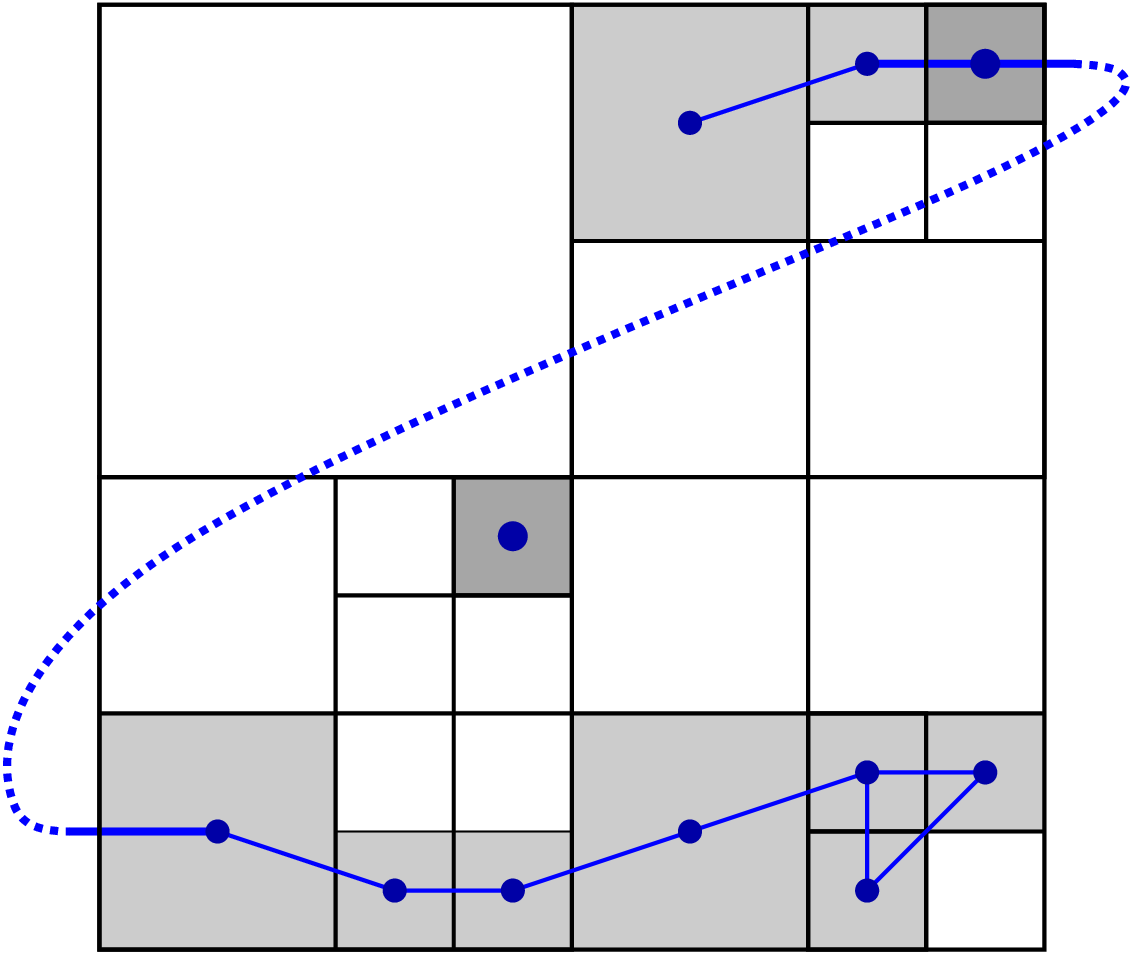}
  \hspace{0.3cm}
  \includegraphics[height=3.5cm]{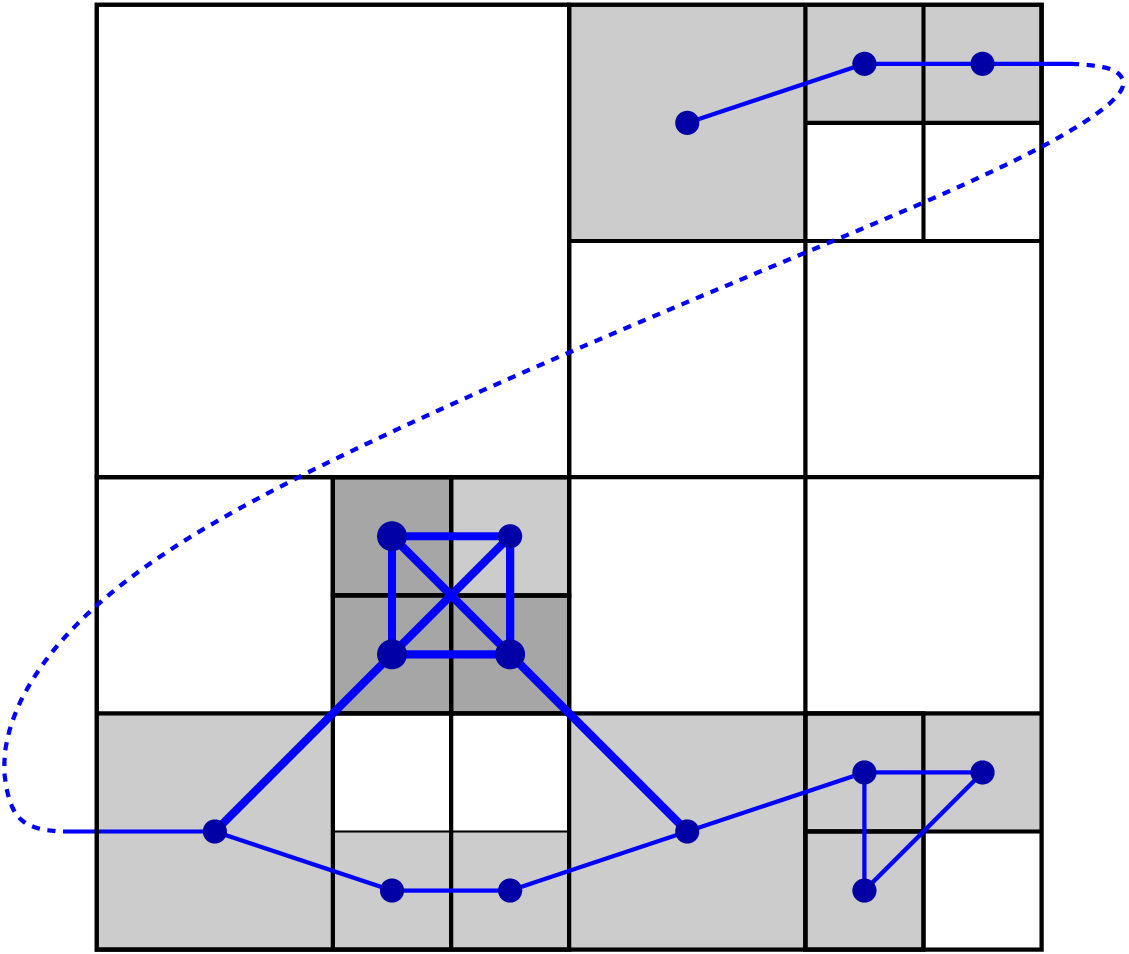}
  \caption{Top: the subdivided $\tilde{\score}_P$. Bottom: The super-levelsets for threshold $0.8$, $0.7$, and $0.6$,
    and the nerve of the boxes (the graph structure in blue). Vertices and edges that enter the graph at this
    threshold are drawn thicker. We observe that for $0.8$, a new connected component is formed in the upper region.
    For $0.7$, another isolated vertex is formed; moreover, the upper-right box connects with the lower-left box
    because of the M\"obius geometry, hence, the component formed at $0.8$ merges with another one, yielding
    a persistence of $0.1$ for the component. At $0.6$, the component formed at $0.7$ also gets merged, yielding
  a persistence of $0.1$ as well. Note that the graph is connected for $0.6$.}
  \label{fig:nerve}
\end{figure}

How to realize such a Lipschitz predicate?
Elementary calculus reveals that if the kernel function $\kernel$
has Lipschitz constant $\lambda_\kernel$ and all input points are in the
unit disk, $\score_P$ has a global Lipschitz constant
of $\lambda_\kernel\sqrt{2}$. %
However, using this global bound will
yield a uniform (and inefficient) subdivision. We obtain a better Lipschitz constant based on the locality
of the box. We outline a strategy for the two considered kernel functions
in Appendix~\ref{app:local_lipschitz}.

Computing the persistent homology (in dimension $0$) of $\tilde{\score}_P$ is done
via standard methods illustrated in Figure~\ref{fig:nerve}.
Via the Nerve Theorem~\cite{nerve}, the problem can be recast
into tracking the connected components of a graph that expands in terms of vertices
and edges, and that problem can be solved in almost linear time via
the union-find data structure~\cite{cormen}.
Note that in this step, we remember the twisted identification of the boundary
of $M$ (the domain of $\tilde{\score}_P$), effectively computing the persistent homology
over the M\"obius strip.

There is a choice how to ultimately select the local maxima
, and hence the lines in primal space, from the persistence information.
A simple choice is
to just select the $k$ local maxima with largest persistence, where $k$ is
a given constant. Another option is to fix a threshold $\alpha$ and pick
those local maxima with persistence at least $\alpha$. Here, $\alpha$ may be
chosen a-priori or a-posteriori~-- we demonstrate the latter in Section~\ref{sec:experiments}.

\section{A Use-case}
\label{sec:experiments}

We implemented a prototype of our method in Python, available at \url{https://github.com/JRC-ISIA/TopologicalHoughTransformation}.
Figure~\ref{fig:comparison_methods} shows an exemplary run of our method
on a point cloud sampled with noise from three lines (top-left).

\begin{figure}[htb]
    \centering
    \includegraphics[width=.8\textwidth]{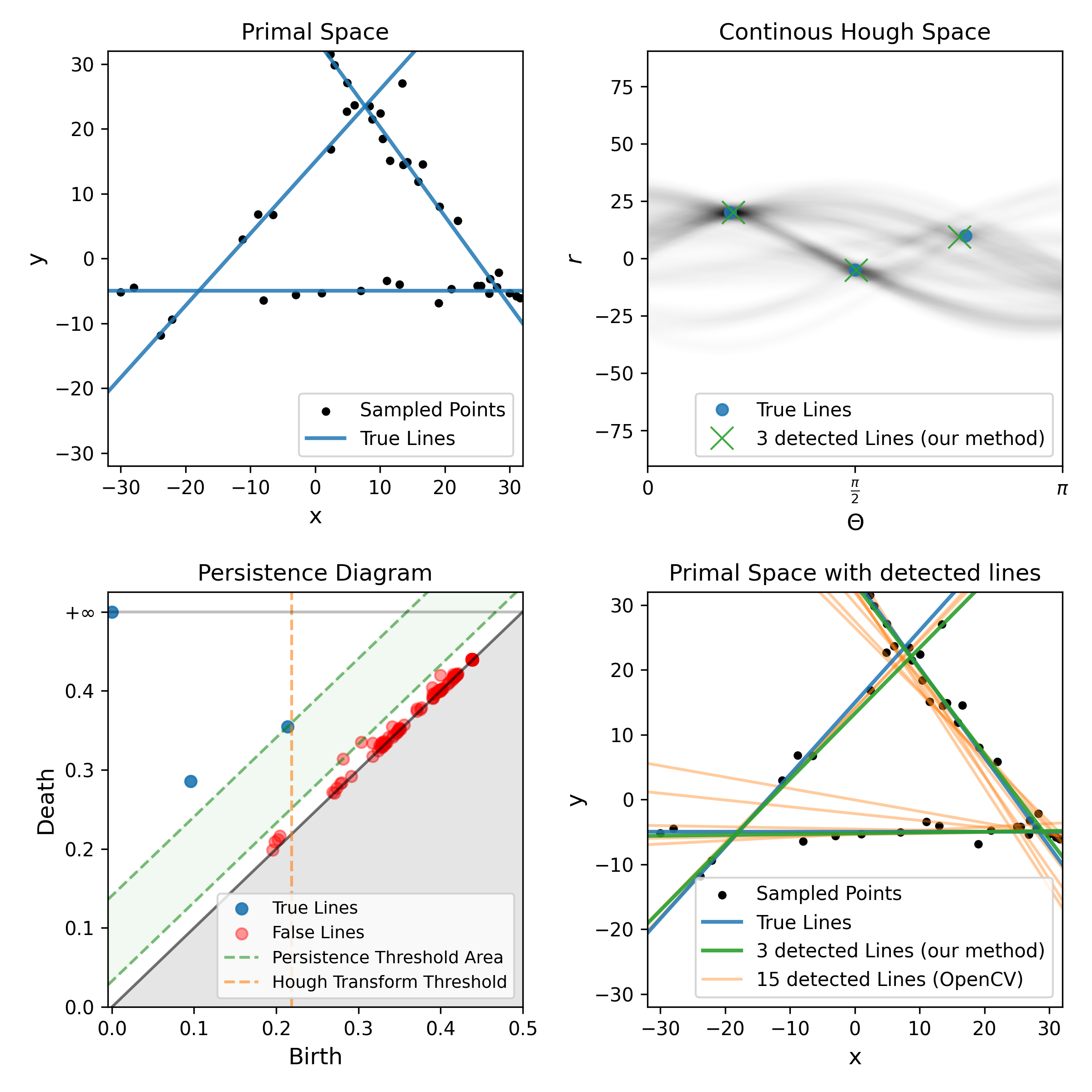}
    \caption{Illustration of the proposed method.}
    \label{fig:comparison_methods}
\end{figure}

Notably, each line
is sampled with a different number of points, resulting in different heights
for the three distinguished local maxima in the score surface. Nevertheless,
the score function has three distinguished ``bumps'' (top-right)
and the persistence homology reveals this fact: the \emph{persistence diagram}
(bottom-left) represents every local maximum by its height (birth) and the threshold
when it ceases to be dominant (death); its persistence is then the distance to the diagonal. With this interpretation, we observe three clearly distinguished
local maxima, and selecting them yields three lines that closely approximate
the ground truth (bottom-right).

In comparison, we display the outcome of the Hough transform as computed
by the \texttt{OpenCV} package~\cite{opencv_library}. In there, the specified threshold filters
local maxima by height, not by persistence. Consequently,
depending on how we choose the threshold, we either miss one of the
ground truth lines altogether, or the method returns
many local maxima (of low persistence) that all
approximate the line with the densest sampling because it creates a larger
bump in dual space. It would then require further post-processing to remove
these duplicates while this filtering step is elegantly integrated
in our framework already.

We ran more tests on instances similar to the one in Figure~\ref{fig:comparison_methods} and did some statistical analysis on the detected lines, showing
that the above situation is not cherry-picked but seems typical if
one deals with lines of different densities. See Appendix~\ref{app:experiments}
for details.

\newpage

\section{Discussion}
We have demonstrated that simple tools from computational geometry and topology
have the potential to increase the quality of line detection via the Hough
transform. 
Although the experiments presented are limited to line detection, the proposed 
method generalizes to arbitrary shapes by varying the shape parametrization and
the corresponding parameter space.
We are planning more extensive tests, including the use
of different kernel functions, evaluation on the effect of the kernel size, real image data, and the comparison with
state-of-art line detection methods.
A preliminary evaluation can be found in Appendix~\ref{app:experiments}.
As a prerequisite, we will first
implement a fast version of our prototype~-- we are optimistic that
the combination of adaptive subdivisions and the fact that persistence
in dimension $0$ can be computed in almost linear time will allow us
to apply our method to large image data sets.

\bibliography{biblio}

\appendix

\section{Gradients of the score function over the parameter space}
\label{app:global_lipschitz}

We consider the score function $(r, \Theta) \mapsto \score(r, \Theta)$ and
investigate a bound for the Lipschitz constant of this function. First, we
recall the definition
\begin{align}
  \score(r, \Theta) &= \frac{1}{|P|} \sum_{p \in P} \kappa(\Delta(p, \ell(r, \Theta))).
\end{align}

We are going to calculate $\nabla \score(r, \Theta) = (\pdv{\score}{r},
\pdv{\score}{\Theta})$ and derive a bound for the Lipschitz constant
$\lambda_\score$ of $\score$ via $\lambda_\score = \sup_{(r, \Theta) \in M}
\|\nabla \score(r, \Theta)\|$ with $\|.\|$ being the Euclidean norm.

\medskip

Note $\Delta(p, \ell(r, \Theta)) = |r - x_p \cos \Theta - y_p \sin \Theta|$
for $p = (x_p, y_p)$. Hence, we have
\begin{align}
  \left|\pdv{\Delta}{r}\right| &= |\sgn(r - x_p \cos \Theta - y_p \sin \Theta)| = 1, \\
  \left|\pdv{\Delta}{\Theta}\right| &= |\sgn(r - x_p \cos \Theta - y_p \sin \Theta) \cdot (x_p \sin \Theta - y_p \cos \Theta)| \le \|p\|.
\end{align}

Furthermore note that $|\pdv{\kappa}{\Delta}|$ is bound by the Lipschitz
constant $\lambda_\kappa$ of the kernel function~$\kappa$. We
therefore have
\begin{align}
  \left|\pdv{\score}{r}\right| &\le \frac{1}{|P|} \sum_{p \in P} \left|\pdv{\kappa}{\Delta}(r, \Theta)\right| \cdot \left|\pdv{\Delta}{r}\right| \le \lambda_\kappa \label{nabla1}\\
  \left|\pdv{\score}{\Theta}\right| &\le \frac{1}{|P|} \sum_{p \in P} \left|\pdv{\kappa}{\Delta}(r, \Theta)\right| \cdot \left|\pdv{\Delta}{\Theta}\right| \le \lambda_\kappa \cdot \frac{1}{|P|} \sum_{p \in P} \|p\| \le \lambda_\kappa \sup_{p \in P} \|p\|\label{nabla2}.
\end{align}

And this gives us the bound
\begin{align}
  \lambda_\score &= \sup_{(r, \Theta) \in M} \|\nabla \score(r, \Theta)\| \\
                 &\le \lambda_\kappa \sqrt{1 + \sup_{p \in P} \|p\|^2} \\
                 &\le \lambda_\kappa \sqrt{1 + D^2} \\
\end{align}
if $P$ is within a ball of radius $D$ around the origin. If we assume that all points
of $P$ are within the unit circle, we have $D=1$, hence a Lipschitz constant of $\lambda_\kappa \sqrt{2}$.

\section{Lipschitz constants in a box}
\label{app:local_lipschitz}
For a point $q=(x_q,y_q)$ and a (continuous) function $f:\nR\to\nR$, the \emph{vertical distance} is given by $|y_q-f(x_q)|$.
For an axis-aligned box $B$, the \emph{vertical distance of $B$ and $f$} is the minimum
of all vertical distances of $q\in B$ to $f$. In particular, that distance is $0$ if and only if the function
graph of $f$ enters the box.

Note that if $f$ is a sinusoidal curve consisting of all solutions $(r,\Theta)$ of the equation
$x_p \cos \Theta + y_p \sin \Theta=r$, the vertical distance can be computed
in constant time by considering the (up to two) local extrema
of $f$ within the $x$-range of $B$, and the corners of the $B$.

Using the vertical distance, we can derive an improved adaptive Lipschitz constant of the score function
restricted to a box $B$. For the kernel function $\kernel$ and $\delta\geq 0$,
we define
$\lambda_\kernel^{\geq \delta}$ as the Lipschitz constant of $\kernel$ on the interval $[\delta,+\infty)$.
By the properties of $\kernel$, $\lambda_\kernel^{\geq \delta}$ is non-increasing in $\delta$
and converges to $0$ for $\delta\to\infty$ for the two kernel functions
that we consider below.
The following lemma is a direct consequence of our definitions:

\begin{lemma}
  \label{lem:lll}
  For a fixed box $B$, let $\left.\score\right|_B(r,\Theta)$ denote the score function restricted to the box $B$.
  Let $P$ be a finite point set within the unit circle.
  For $p=(x_p,y_p)\in P$,
  let $\delta(p)$ denote the vertical distance of $B$ and the sinusoidal curve $r(\Theta) = x_p \cos \Theta + y_p \sin \Theta$.
  Then, $\left.\score\right|_B$ has a Lipschitz constant of at most
  \[\frac{\sqrt{2}}{|P|} \sum_{p \in P} \lambda_\kernel^{\geq \delta(p)}\]
\end{lemma}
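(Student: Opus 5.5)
The plan is to reduce to a single point and then average. Since a sum of Lipschitz functions is Lipschitz with the sum of the constants, it suffices to show that for each fixed $p\in P$ the function $(r,\Theta)\mapsto \kernel(\Delta(p,\line(r,\Theta)))$, restricted to $B$, is Lipschitz with constant $\sqrt{2}\,\lambda_\kernel^{\geq\delta(p)}$. Dividing by $|P|$ then gives the claimed bound.

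\textbf{Step 1: a lower bound on the distance in $B$.} For $(r,\Theta)\in B$ we have $\Delta(p,\line(r,\Theta)) = |r - x_p\cos\Theta - y_p\sin\Theta|$. This is exactly the vertical distance (with $\Theta$ as the abscissa and $r$ as the ordinate) between the point $(\Theta,r)$ and the sinusoidal curve $r(\Theta)=x_p\cos\Theta+y_p\sin\Theta$. By the definition of $\delta(p)$ as a minimum over $B$, every $(r,\Theta)\in B$ therefore satisfies $\Delta\geq\delta(p)$. Consequently, on $B$ the kernel is only evaluated on $[\delta(p),\infty)$, where it is $\lambda_\kernel^{\geq\delta(p)}$-Lipschitz.

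\textbf{Step 2: the map $g_p(r,\Theta)=\Delta(p,\line(r,\Theta))$ is $\sqrt{2}$-Lipschitz.} The function $u=r-x_p\cos\Theta-y_p\sin\Theta$ has gradient $(1,\; x_p\sin\Theta-y_p\cos\Theta)$. Its Euclidean norm is at most $\sqrt{1+\|p\|^2}\le\sqrt2$ because $P$ lies in the unit circle. Taking the absolute value does not increase the Lipschitz constant. Since the bound holds everywhere and $B$ is convex, the mean value inequality along the segment joining two points of $B$ gives $|g_p(z)-g_p(z')|\le\sqrt2\,\|z-z'\|$ for $z,z'\in B$. This is the same computation as in Appendix~\ref{app:global_lipschitz}.

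\textbf{Step 3: compose.} For $z,z'\in B$, both values $g_p(z)$ and $g_p(z')$ lie in $[\delta(p),\infty)$ by Step 1. Hence
\[|\kernel(g_p(z))-\kernel(g_p(z'))|\le\lambda_\kernel^{\geq\delta(p)}\,|g_p(z)-g_p(z')|\le\sqrt2\,\lambda_\kernel^{\geq\delta(p)}\|z-z'\|.\]
Summing over $p\in P$ and normalizing by $|P|$ finishes the proof.

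The main point to get right is Step 1: identifying the vertical distance with the orthogonal point–line distance $\Delta$, and keeping the axis conventions straight, since the curve is a function of $\Theta$. Composition is valid because the whole segment between two points of $B$ stays in $B$, so no argument ever leaves the interval $[\delta(p),\infty)$. Everything else is routine.
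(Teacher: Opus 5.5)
Your proof is correct and is essentially the paper's argument. Both reuse the gradient bounds (\ref{nabla1}) and (\ref{nabla2}) with $\|p\|\le 1$, and both observe that on $B$ the kernel is only evaluated at distances $\Delta\ge\delta(p)$, where it is $\lambda_\kernel^{\geq\delta(p)}$-Lipschitz. Your composition-of-Lipschitz-maps formulation is slightly cleaner than the paper's chain-rule version, since it needs no differentiability of $\kernel$ or $|\cdot|$, and it also has the inequality the right way round, where the paper's proof has the typo $\geq\lambda_\kernel^{\geq\delta(p)}$.
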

\begin{proof}
  Note that the estimations in $(\ref{nabla1})$ and $(\ref{nabla2})$ from Appendix~\ref{app:global_lipschitz}
  remain valid (with $\|p\|\leq 1$ by assumption)
and moreover,
when restricting to $B$, $\kernel$ only can take values $\geq\delta(p)$ by definition, which yields
that $\left|\pdv{\kappa}{\Delta}\right|\geq \lambda_\kernel^{\geq \delta(p)}$.
\end{proof}

\subparagraph{The hat kernel}
Considering the definition of $\kernel:=\kernel_\text{hat}$, we observe at once that
\[
  \lambda_\kernel^{\geq \delta} = 
  \begin{cases}
    \frac{1}{\sigma} & \delta\leq \sigma \\
    0                & \delta\geq \sigma
  \end{cases}.
\]

Therefore, the computation of the local Lipschitz constant for $B$ boils down checking whether
the vertical distance for $B$ and the curve of $p\in P$ has vertical distance more than $\sigma$
or not. Writing $C$ for the number of curves with smaller vertical distance then $\sigma$,
we immediately get with Lemma ~\ref{lem:lll} a local Lipschitz constant of
\[\frac{C \sqrt{2}}{|P|\sigma}.\]
Note that $\lambda_\kernel=\frac{1}{\sigma}$, so with the trivial upper bound of $C\leq |P|$, the bound
becomes $\sqrt{2}\lambda_\kernel$, the global upper bound.

\subparagraph{The RBF kernel}
Set
\[ \kernel(x):=\kernel_\text{RBF}(x)=\exp\left(-\frac{x^2}{2\sigma^2}\right). \]
Basic calculus reveals that the derivative of $\kernel$ is
\[ \kernel'(x)=\frac{-x}{\sigma^2}\exp\left(-\frac{x^2}{2\sigma^2}\right) \]
whose absolute value is maximized
at $x=\sigma$ with absolute value equal to $\frac{1}{\sigma\sqrt{e}}$.
Moreover, the first derivative is decreasing in the interval $[\sigma,\infty)$.
Therefore, we can simply bound
\[
  \lambda_\kernel^{\geq \delta} = \begin{cases}
    \frac{1}{\sigma\sqrt{e}}                                             & \delta\leq \sigma \\
    \frac{\delta}{\sigma^2} \exp\left(-\frac{\delta^2}{2\sigma^2}\right) & \delta\geq \sigma
  \end{cases}.
\]
and obtain an adaptive bound using Lemma~\ref{lem:lll}.

\section{Experimental evaluation}
\label{app:experiments}
For the classical method (implemented in \texttt{OpenCV}~\cite{opencv_library}), the choice of a threshold is highly dependent on the input data (right in \Cref{fig:comparison_methods}),
making it difficult, if not impossible, to identify a single threshold that
enables the detection of the correct number of lines.
With the following experiment,
we demonstrate that the proposed
method is more likely to produce a threshold
that yields a clear and robust separation.
For this experiment, we generated $1000$ images, each containing four randomly
parameterized lines $(\Theta, r)$ with 18, 17, 16, and 15 sampled
points, respectively.
Each point was perturbed by an orthogonal displacement drawn uniformly from $\left[-1,1\right)$.
Our method is parametrized using the hat kernel $\kappa_\text{hat}$, $\sigma=5$, and $\eps=5$.
We measure the absolute gap between the last included and the first candidate line
excluded using persistence, $\Delta_{\text{Pers}}$ (ours), and vote counts,
$\Delta_{\text{Vote}}$ (\texttt{OpenCV}), taking the provided order of candidate lines into account.
The results are illustrated in \Cref{fig:exp1}.
While our method consistently yields a non-zero interval from which a valid
threshold can be selected, the vote-based approach exhibits
$\Delta_{\text{Vote}} = 0$ for $66.5\%$ of the experiments, thereby preventing the
selection of a threshold that correctly recovers the true number of lines.

\begin{figure}[htb]
    \centering
    \includegraphics[width=0.66\textwidth]{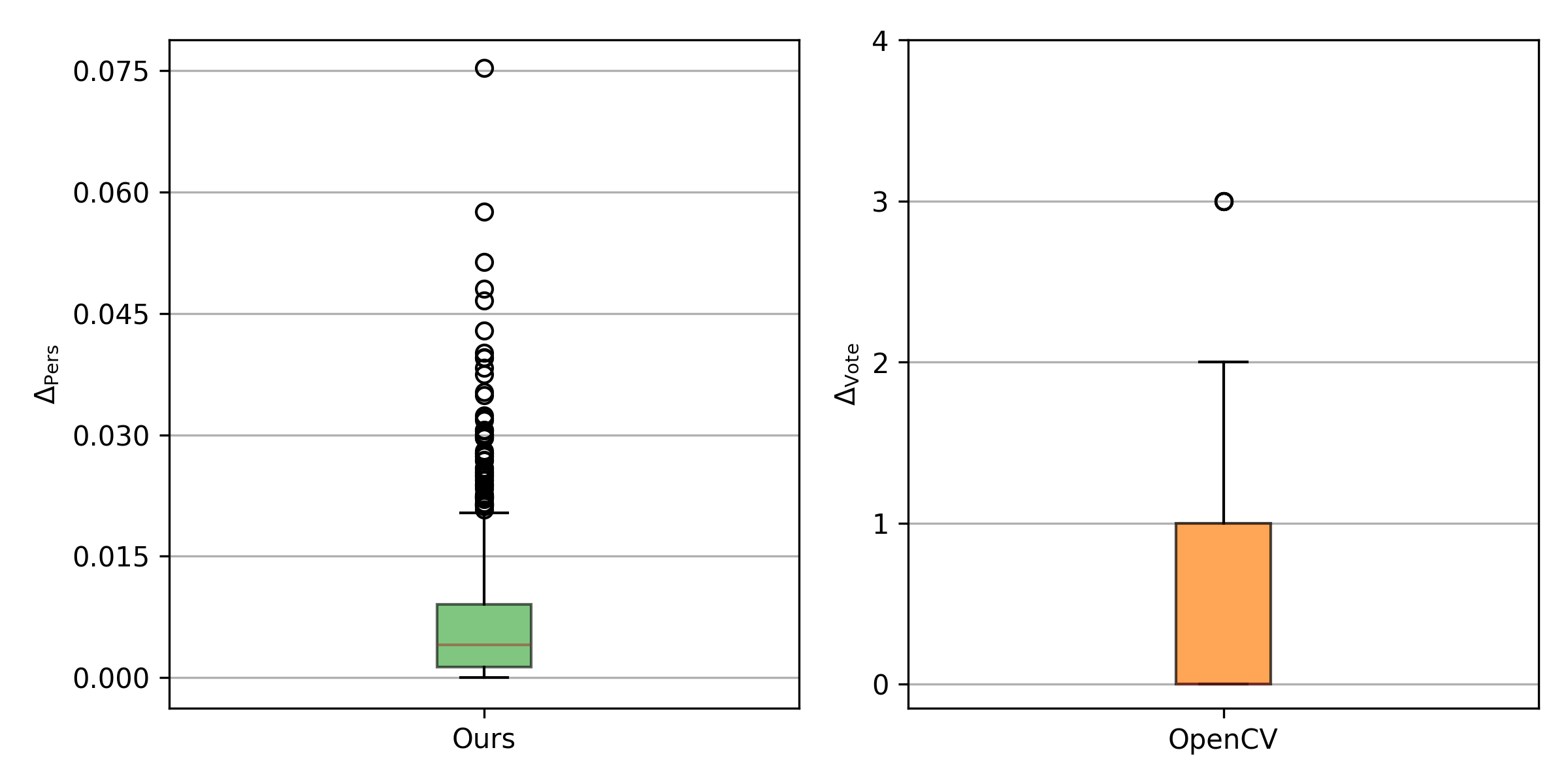}
    \caption{Illustration of the admissible threshold ranges
        $\Delta_\text{Vote}$ and $\Delta_\text{Pers}$ in Hough space for
        persistence-based thresholding (our method) and discrete, vote-based
        thresholding (\texttt{OpenCV}).}
    \label{fig:exp1}
\end{figure}

Further, assuming that both methods detect the optimal number of candidate lines in the
generated images, \Cref{fig:exp2} compares the quality of the detected lines
using the Euclidean distance in parameter space, and the absolute deviations in
$r$ and $\Theta$ to the ground truth.
Across all three evaluation metrics, the results consistently demonstrate a
performance advantage of our method over the baseline.

\begin{figure}[htb]
    \centering
    \includegraphics[width=1.0\textwidth]{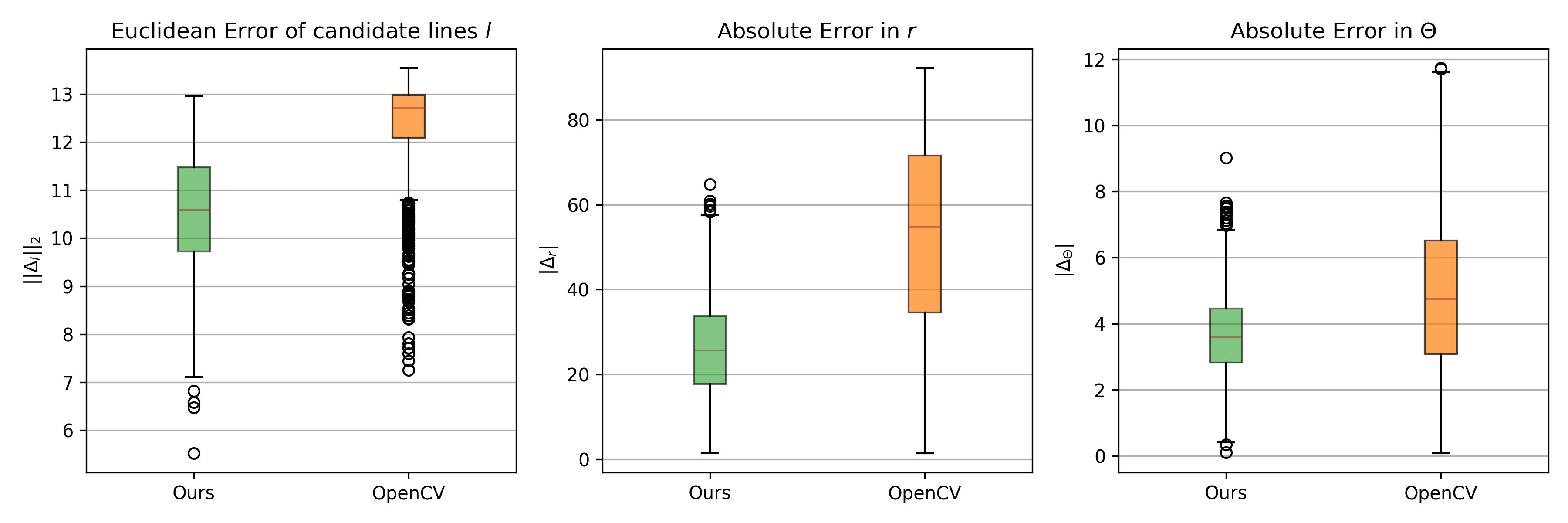}
    \caption{Results of the experiment evaluating detection quality using the
        Euclidean norm in normalized parameter space and the absolute errors in
        $r$ and $\Theta$ relative to the ground-truth lines, computed from the
        best pairwise matches.}
    \label{fig:exp2}
\end{figure}

To develop intuition on the optimal kernel width $\sigma$, we conducted a
preliminary experiment on a single random line detection across varying
noise levels, using the \emph{hat} kernel $\kappa_\text{hat}$ and a fix
$\eps=5$.
For each $\sigma \in [1,20]$, we performed 20 trials with different lines
randomly parameterized lines $(\Theta, r)$ per noise level $n \in [0,20]$,
where each point on the line is subject to an orthogonal displacement drawn
uniformly from $[-n,n)$.
The mean Euclidean error $\| \cdot \|$ across all trials is then computed for each
configuration, and the $\sigma$ value yielding the minimum error is identified.
The results, shown in Figure~\ref{fig:ratio_lambda_experiment}, suggest that the
optimal $\sigma$ corresponds to the expected noise level, implying that when the
maximum noise level can be estimated in advance, this prior knowledge can be
leveraged to maximize performance.

\begin{figure}[htb]
    \centering
    \includegraphics[width=1.0\textwidth]{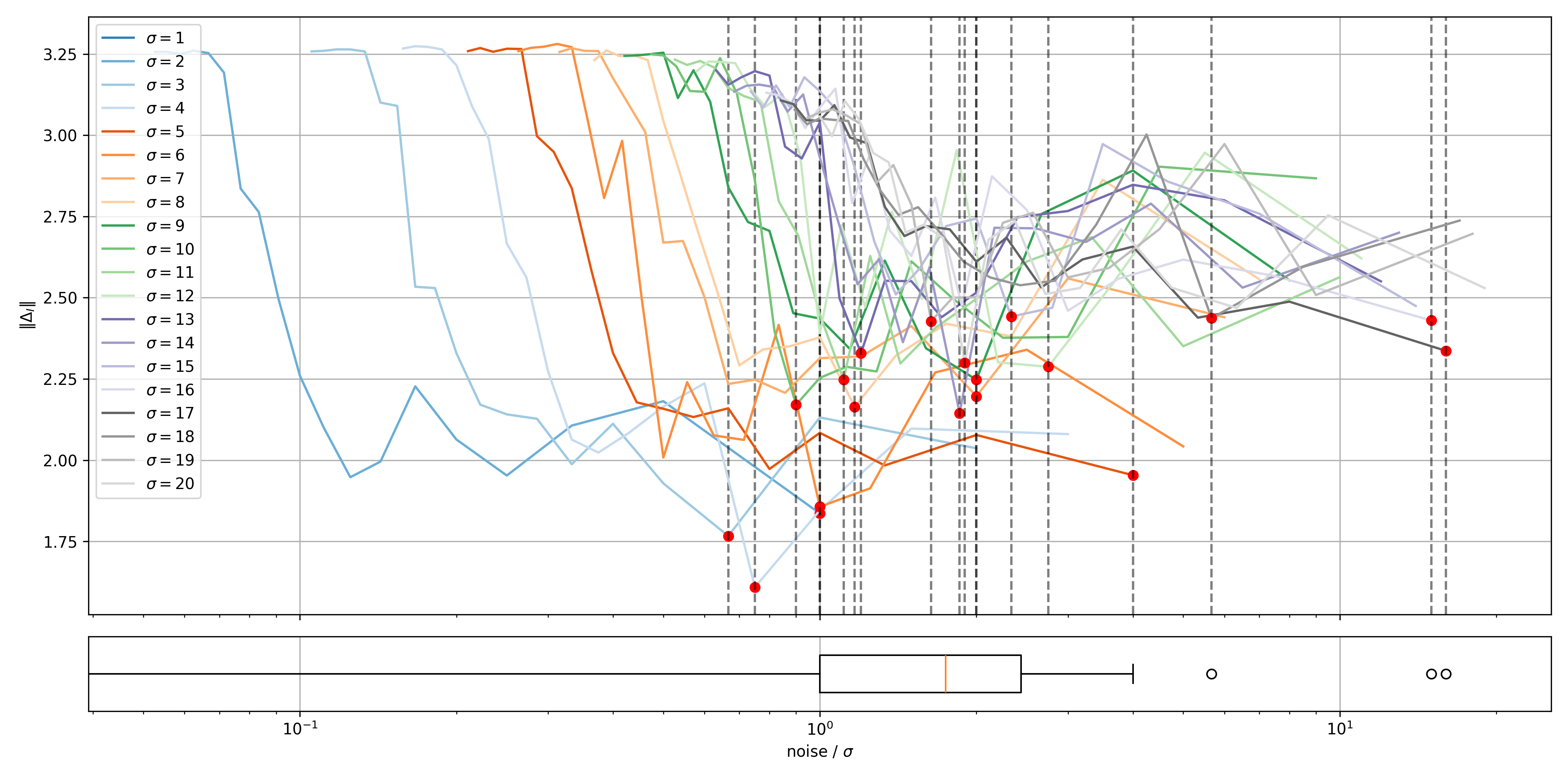}
    \caption{Analysis of the optimal kernel width $\sigma$. We plot the error
    as a function of the ratio between the noise level (in pixels) and
    $\sigma$. Each configuration is repeated 20 times and the mean error is
    reported. The minimum for each configuration is marked by a red dot and a
    vertical dashed line. The distribution of all minima is shown in the bottom
    boxplot.}
    \label{fig:ratio_lambda_experiment}
\end{figure}

To investigate the sensitivity of the method to the choice of $\eps$, we conduct
an experiment across varying values of $\eps$, fixing the \emph{hat} kernel
$\kappa_\text{hat}$ with $\sigma=5$.
For each configuration, $20$ trials are performed, each consisting of detecting
a single random line parameterized by $(\Theta, r)$, with $18$ sample points on a
$32\times 32$ image, where each point is subject to a random orthogonal displacement
drawn uniformly from $[-5, 5)$.
We report the Euclidean error $\| \cdot \|$ between the detected and the true
line as a sequence of boxplots over each value of $\eps$, alongside the per-run
detection runtime (in ms).

The results, shown in Figure~\ref{fig:eps_effect_experiment}, reveal a clear
trade-off between detection performance and computational cost as a function of $\eps$.
A lower $\eps$ yields finer quads and higher sensitivity, resulting in
better detection performance, but also a greater number of quads generated from
the continuous parameterization and thus a higher computational cost.
In this experimental setting, detection performance degrades as $\eps$ increases
up to $\eps=13$, beyond which both performance and runtime stabilize.
Notably, the computation speedup gained by relaxing detection performance
(e.g.\ $\eps=1$ vs.\ $\eps=13$) amounts to a factor of approximately $100$.

\begin{figure}[htb]
    \centering
    \includegraphics[width=1.0\textwidth]{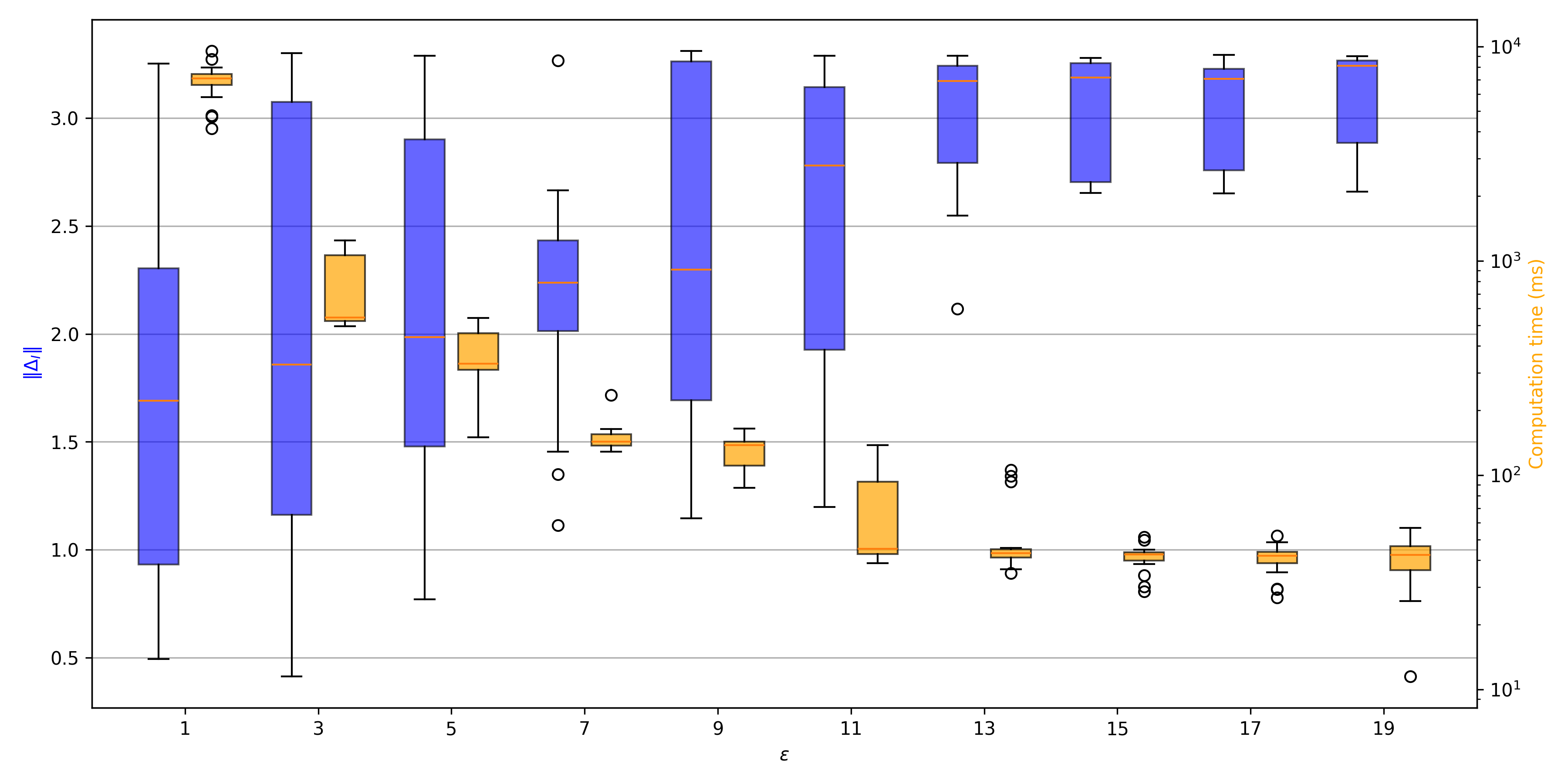}
    \caption{Detection performance and runtime (ms) as a function of $\eps$,
    evaluated over 20 trials per configuration on a single random line with
    18 sample points on a $32\times 32$ image.}
    \label{fig:eps_effect_experiment}
\end{figure}

\end{document}